\documentclass[conference]{IEEEtran}
\IEEEoverridecommandlockouts
\addtolength{\topmargin}{9mm}
\usepackage{cite}
\usepackage[cmex10]{amsmath}
\usepackage{amssymb,amsfonts}
\usepackage{algorithmic}
\usepackage{graphicx}
\usepackage{textcomp}
\usepackage{xcolor}
\usepackage{amsmath,amsfonts,amsthm,amssymb}
\usepackage{algorithmic,mathrsfs}
\usepackage{algorithm,dutchcal}
\theoremstyle{remark}
\newtheorem{theorem}{\hskip 1em Theorem}
\newtheorem{lemma}{\hskip 1em Lemma}
\newtheorem{remark}{\hskip 1em Remark}
\newtheorem{definition}{\hskip 1em Definition}
\newtheorem{proof skecth}{Proof skecth}
\usepackage{array,caption}
\usepackage[mathscr]{eucal}
\usepackage[caption=false,font=LARGE,labelfont=rm,textfont=rm]{subfig}
\usepackage{textcomp}
\usepackage{stfloats}
\usepackage{url}
\usepackage{verbatim}
\usepackage{graphicx,float,balance}
\usepackage{multirow}
\usepackage{cite}
\usepackage{soul}
\usepackage{color,xcolor}
\def\BibTeX{{\rm B\kern-.05em{\sc i\kern-.025em b}\kern-.08em
		T\kern-.1667em\lower.7ex\hbox{E}\kern-.125emX}}
\interdisplaylinepenalty=2500 
\hyphenation{op-tical net-works semi-conduc-tor}
\begin{document}
	\title{Second-Order Identification Capacity of AWGN Channels}
	
	\author{
	\IEEEauthorblockN{Zhicheng Liu\IEEEauthorrefmark{1}\IEEEauthorrefmark{2}, Yuan Li\IEEEauthorrefmark{2},
		Huazi Zhang\IEEEauthorrefmark{2}, Jun Wang\IEEEauthorrefmark{2}, Guiying Yan \IEEEauthorrefmark{3}\IEEEauthorrefmark{4} and Zhiming Ma \IEEEauthorrefmark{3}\IEEEauthorrefmark{4}}
	\IEEEauthorblockA{\IEEEauthorrefmark{1}%
		School of Mathematics and Statistics, Beijing Jiaotong University, Beijing, China, 20118002@bjtu.edu.cn}
	\IEEEauthorblockA{\IEEEauthorrefmark{2}%
		Huawei Technologies Co. Ltd., Hangzhou China, \{liyuan299, zhanghuazi, justin.wangjun\}@huawei.com}
	\IEEEauthorblockA{\IEEEauthorrefmark{3}%
		University of Chinese Academy and Sciences}
	\IEEEauthorblockA{\IEEEauthorrefmark{4}%
		Academy of Mathematics and Systems Science, CAS, Beijing China, yangy@amss.ac.cn, 	mazm@amt.ac.cn
	}
}
	
	\maketitle
	\begin{abstract}
		In this paper, we establish the second-order randomized identification capacity (RID capacity) of the Additive White Gaussian Noise Channel (AWGNC). On the one hand, we obtain a refined version of Hayashi's theorem to prove the achievability part. On the other, we investigate the relationship between identification and channel resolvability, then we propose a finer quantization method to prove the converse part. Consequently, the second-order RID capacity of the AWGNC has the same form as the second-order transmission capacity. The only difference is that the maximum number of messages in RID \emph{scales double exponentially} in the blocklength.
	\end{abstract}
	
	\begin{IEEEkeywords}
		randomized identification, AWGN channels, channel resolvability, quantization.
	\end{IEEEkeywords}
	
	\section{Introduction}
	\IEEEPARstart{W}{ith} the emergence of IoT applications \cite{ref1,ref2}, modern communications in the next-generation wireless network framework (XG) require robust and ultra-reliable low latency information exchange between a large pool of potential smart devices. Numerous XG applications	are event-triggered communication systems, such as vehicle-X communication \cite{ref3,ref4,ref5},  tactile internet \cite{ref6, ref7}, industry 4.0, Online sales, etc. The emergence of novel communication tasks, including control systems \cite{ref8}, the automotive domain \cite{ref9}, watermarking \cite{ref10,ref11,ref12}, recommendation systems \cite{ref13}, and other scenarios requiring quick or small checks, presents new challenges. 
	For many of these problems, signaling a user, rather than transmitting a bulk of data, becomes the main task. As such, the identification approach suggested by Ahlswede and Dueck \cite{ref14} is more suitable than the transmission scheme as studied by Shannon \cite{ref15}.  Some other possible applications of identification codes have been pointed out in \cite{ref31,ref27}.
	
	In identification, receiver-$i$ only cares whether message-$i$ is transmitted. Once receiver-$i$ believes message-$i$ is not sent, it does not attempt to decode that message. 
	In randomized identification, message-$i$ is sent by randomly transmitting a codeword from receiver-$i$’s codebook, and it was proved that the optimal code size scales double exponentially in the block length with vanishing type-I and type-II error rates. Han and Verd$\mathrm{\acute{u}}$ \cite{ref16} provided a new perspective named channel resolvability to further discuss the identification capacity. Subsequently, Steinberg \cite{ref17} proposed a nuch more tighter converse bound for the randomized identification capacity. Hayashi \cite{ref18} extended the previous results to wiretap channels and established the error exponents. Han obtained the identification capacity of the continuous input channels in \cite{ref19}. The second-order RID capacity of the discrete memoryless channels (DMCs) was  demonstrated in \cite{ref20} which relies on the finiteness assumption of the input alphabet, hence can not apply to continuous input channels directly. Based on constant-weight codes (CWCs) that result from concatenating a CWC initialization with outer linear block codes, Verd$\mathrm{\acute{u}}$ and Wei \cite{ref21} suggested the first explicit ID code construction.
	
	In the deterministic setup, given a discrete memoryless channel (DMC), the number of messages grows exponentially with the blocklength. R. Ahlswede and Ning Cai determined the deterministic identification capacity (DID capacity) for DMCs \cite{ref22}.  In particular, J$\mathrm{\acute{a}}$J$\mathrm{\acute{a}}$ \cite{ref23} showed that the deterministic identification (DI) capacity of a binary symmetric channel is $1$ bit per channel use, as one can exhaust	the entire input space and assign (almost) all binary $n$-tuples as codewords. Recently, Salariseddigh et al. established the deterministic identification capacity for AWGN channels and AWGN channels with slow and fast fading coefficients \cite{ref24, ref25}. Li et al.  \cite{ref26} proposed the deterministic identification capacity for block memoryless fading channels without CSI.
	
	Figure $\ref{fig1}$ provides a brief overview of the connections between our work and previous research. We extend the results presented by Watanabe to the continuous input alphabet, specifically, by determining the second-order RID capacity for the AWGNCs. 
	\begin{figure}[H]
		\centering
		\includegraphics[height=3.5cm,width=8.5cm]{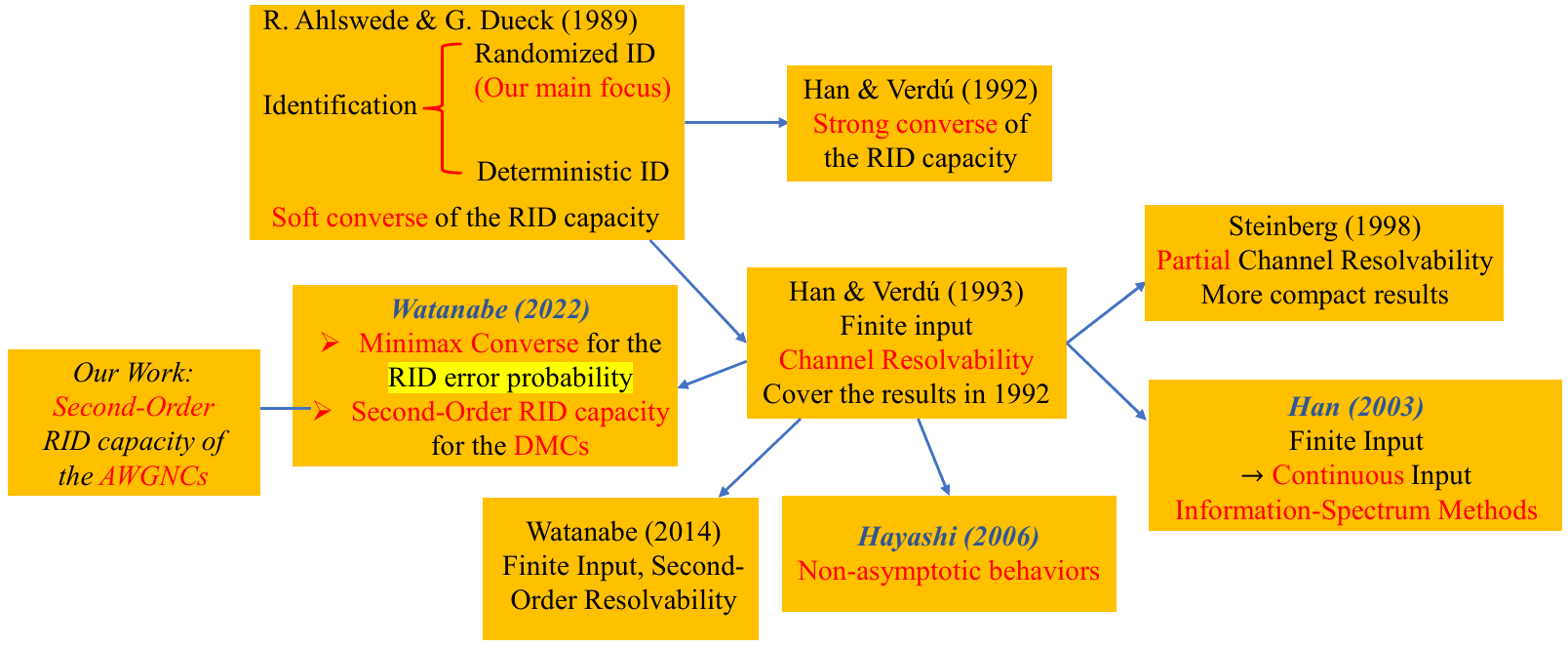}
	\captionsetup{justification=centering}
		\caption{Research status of the Randomized Identification Problems.}
		\label{fig1}
	\end{figure}
	
	In this paper, we derive the second-order RID capacity of the AWGNC. Our contributions are summarized as follows:
	\begin{enumerate}
		\item{We propose a refined version of Hayashi's theorem \cite{ref18}, which results in the achievability part.}
		\item{We propose a finer quantization method than that in \cite{ref19} to obtain the converse part.}
	\end{enumerate}
	
	The remainder of this paper is structured as follows. Section II provides the background of the identification and the resolvability. Section III introduces the main results of the paper: the second-order RID capacity of the AWGNC. Finally, we draw conclusions in Section IV. The main differences between our work and the previous studies are provided in Table \ref{tab_demo}.
	
	\textit{Notation Conventions:} In this paper, the lowercase letters $x$,$y$,$z$,$\cdots$ represent the value of random variables, and uppercase letters $X$,$Y$,$Z$,$\cdots$ represent random variables. The probability distribution of random variable $X$ is specified by a cumulative distribution function (CDF) $F_{X}(x) = \mathrm{Pr}(X \leq x)$, or alternatively, by a probability density function (PDF) $P_{X}(x)$. 
	$X^{n}=(X_{1},...,X_{n})$ and $x^{n} = (x_{1}, x_{2}, . . . , x_{n})$ are  random vectors of length-$n$ and its realization, respectively. $\mathcal{X}^{n}$ and $\mathcal{Y}^{n}$ represent the $n$th Cartesian product of input alphabet and output alphabet. The $\ell^{2}$-norm of $x^{n}$ is denoted by $||x^{n}||$.
	A probability distribution $P_{X^{n}}$ is called an $M$-type, if for any integer $M>0$ and every $x^{n}\in\mathcal{X}^{n}$, 
	\begin{flalign*}
		P_{X^{n}}(x^{n})\in\{0,\frac{1}{M},\frac{2}{M},\cdots,1\}.
	\end{flalign*}
	The variational distance between two probability distributions $P_{X_{1}}$ and $P_{X_{2}}$ defined on the same measurable space $(\Omega,\mathcal{F})$ is $d(P_{X_{1}},P_{X_{2}})=2\sup_{E}|P_{X_{1}}(E)-P_{X_{2}}(E)|$. Denote $\mathcal{P}(\mathcal{X}^{n})$ the set of all distributions supported on $\mathcal{X}^{n}$. $\log x$ is the base-2 logarithm of $ x$ and $\mathrm{I}_{n}$ is the $n\times n$ identity matrix.
	\section{Background}
	In this section, we introduce the identification and resolvability in AWGNC, refer to \cite{ref14,ref16,ref17,ref18,ref19,ref20} for the results in DMCs. 
	\subsection{Identification via Channels}
	In this section, we review some basic results of randomized identification. 
	In AWGNC, the received sequence $Y^{n}=X^{n}+Z^{n}$, where $Z^{n}\sim\mathscr{N}(0,\sigma^{2}\mathrm{I}_{n})$. Denote the signal-to-noise ratio by $\mathrm{SNR}=\frac{P}{\sigma^{2}}$, and without loss of generality, we assume $\sigma^{2}=1$. The channel $W^{n}$ is $W^{n}(y^{n}|x^{n})=(2\pi)^{-\frac{n}{2}}e^{-||y^{n}-x^{n}||^{2}/2}$. The receiver shall identify if message $i\in\{1,\cdots,N\}$ was transmitted or not. A randomized identification code is defined by a family  $\{(Q_{i}, \mathcal{D}_{i})_{i=1}^{N}\}$, where $Q_{i}\in \mathcal{P}(\mathcal{X}^{n}(P)) (i=1,\cdots,N)$ are randomized encoders with $\mathcal{X}^{n}(P)=\{x^{n}\in\mathcal{X}^{n}\big| ||x^{n}||^{2}\leq nP\}$ and $P$ is the power constraint,
	$\mathcal{D}_{i}\subset \mathbb{R}^{n} (i=1,\cdots,N)$ are decoding regions. 
	
	\begin{table*}
		\caption{Comparisons between this work and the previous stuidies.}
		\label{tab_demo}
		\centering	
		\tabcolsep=0.001\linewidth
			%
		\begin{tabular}{ccc}
			\hline
			& Properties & Results \\
			\hline
			\multirow{2}{*}{R. Ahlswede \& G. Dueck (1989) \cite{ref14}}  & \multirow{2}{*}{Finite Input Alphabets, exponentially decay error probability} & \multirow{2}{*}{Soft Converse}  \\
			&  &  \\
			\multirow{2}{*}{T. S. Han \& Verd$\mathrm{\acute{u}}$ (1993) \cite{ref16}}  & \multirow{2}{*}{Finite Input \& Strong Converse Property} & \multirow{2}{*}{Resolvability \& Strong Converse} \\
			& & \\
			\multirow{2}{*}{Y. Steinberg (1998) \cite{ref17}} & \multirow{2}{*}{Finite Input \& Partial Channel Resolvability} & \multirow{2}{*}{Tighter Identification converse bound}    \\
			& &\\
			\multirow{2}{*}{T. S. Han (2003) \cite{ref19}} & \multirow{2}{*}{Finite Input Alphabet $\rightarrow$ Continuous Input Alphabet}  & \multirow{2}{*}{$\log\log N^{*}(\varepsilon, \delta|W^{n}) = n C(P)$}    \\
			& &\\
			\multirow{3}{*}{M. Hayashi (2006) \cite{ref18}} & \multirow{3}{*}{Finite Input \& Wiretap Channels}  & \multirow{3}{*}{ID Converse \& Error Exponents}   \\
			\multirow{4}{*}{} &\multirow{4}{*}{}  & \multirow{4}{*}{} \\
			&&\\
			S. Watatnabe (2022) \cite{ref20}& Minimax Converse for Identification via arbitrary finite input channels & $\log\log N^{*}(\varepsilon, \delta|W^{n})=$\\
			& \& Second-Order Identification Capacity of the DMCs, $\delta\rightarrow 0$& $n\mathrm{C(W)}-\sqrt{n\mathrm{V(\varepsilon|W)}}\mathrm{Q}^{-1}(\varepsilon)+\mathrm{o}(\sqrt{n})$\\
			\multirow{2}{*}{} & \multirow{2}{*}{} & \multirow{2}{*}{} \\
			This work & Second-order Identification Capacity of the AWGN Channels, $\delta \rightarrow 0$  &  Theorem \ref{th1}, $\log\log N^{*}(\varepsilon,\delta|W^{n}) =$  \\
			& & $n\mathrm{C}(P)$\textcolor{red}{$-\sqrt{n\mathrm{V}(P)}\mathrm{Q^{-1}(\varepsilon)}+\mathrm{O}(\log n)$} \\
			\hline
		\end{tabular}
	\end{table*}
	For a given ID code ${(Q_{i},\mathcal{D}_{i})}_{i=1}^{N}$, we define the type-I error (missed detection error) and type-II error (false activation rate) 
	as
	\begin{flalign}
		\mathrm{P_{I}} \triangleq \max_{1\leq i\leq N} Q_{i}W^{n}(\mathcal{D}^{c}_{i}),
		\label{eq1}\\
		\mathrm{P_{II}} \triangleq \max_{1\leq j\neq i\leq N} Q_{i}W^{n}(\mathcal{D}_{j}),
		\label{eq2}
	\end{flalign}
	where $Q_{i}W^{n}$ is the output distribution induced by the input distribution $Q_{i}$, i.e.,
	\begin{flalign}
		Q_{i}W^{n}(y^{n})=\sum_{x^{n}\in\mathcal{X}^{n}(P)}Q_{i}(x^{n})W^{n}(y^{n}|x^{n}).
		\label{eq3}
	\end{flalign}
	Note that the codewords $c_{i}\sim Q_{i}(i=1,\cdots,6)$. 
	
	For all $0< \varepsilon,\ \delta <1$ with $0<\varepsilon+\delta<1$, an ID code ${(Q_{i},\mathcal{D}_{i})}_{i=1}^{N}$ is an $(n,N,\varepsilon,\delta)$-ID code if $\mathrm{P_{I}}\leq \varepsilon$ and $\mathrm{P_{II}}\leq \delta$. 
	\\
	
	Denote $N^{*}(\varepsilon,\delta|W^{n})$ the optimal code size, i.e.,
	\begin{flalign}
		N^{*}(\varepsilon,\delta |W^{n})\triangleq\sup\left\lbrace N\big|(n,N,\varepsilon,\delta)\mathrm{-ID\ code \ exists}\right\rbrace.
		\label{eq4}
	\end{flalign} 
	\quad Since the decoding regions can be overlapped in randomized identification while must be disjoint in transmission, the RID allows much more messages. Furthermore, it is necessary to emphasize that random coding is one of the main factors responsible for the performance gain. Shannon \cite{ref15} proved that the optimal message number scales exponentially with $n$, and provided the transmission capacity \emph{in the exponential scale}, i.e., the channel capacity. 
	While in RID, the capacity in the \emph{exponential scale} is infinite, Ahlswede and Dueck \cite{ref14} further pointed out that the optimal message number scales double exponentially with $n$ for DMCs, they denoted 
	$$\mathrm{C_{ID}}(\varepsilon,\delta|W^{n})=\lim_{n\rightarrow \infty}\frac{1}{n}\log\log N^{*}(\varepsilon,\delta|W^{n})$$
	the RID capacity in the \textit{double-exponential scale}. Surprisingly, $\mathrm{C_{ID}}(\varepsilon,\delta|W^{n})$ equals the Shannon capacity in DMCs. The result was extended to AWGNC by Han \cite{ref19}, i.e.,
	\begin{flalign}
		\mathrm{C_{ID}}(\varepsilon,\delta|W^{n})= \mathrm{C}(P) = \frac{1}{2}\log(1+P),
		\label{eq5}
	\end{flalign}
	as long as $0<\varepsilon+\delta<1$. Notice that the RID capacity and the channel capacity are only numerically equal, while the meanings of the two are different. In fact, the number of messages supported by the RID  far exceeds that in transmission!
	
	Additionally, in \cite{ref27}, the identification code for AWGN is analyzed without randomized encoding. The performance remains superior in terms of transmission, yet with $N^{*}(\varepsilon,\delta|W^{n})=2^{R n\log n}$ where $R$ is the optimal rate (\textit{super-exponential} but not \textit{double-exponential}!).
	
	However, the above first-order approximation (\ref{eq5}) is only valid when $n$ is sufficiently large, thus we still need to propose more accurate expressions to estimate the performance of identification in the finite length regime. It is worth noting that there exists a constructive proof of finite-length identification codes \cite{ref28} with a code construction which was analyzed in \cite{ref29} and implemented in \cite{ref13}. 
	
	In the next section, we provide our main results, i.e., the second-order RID capacity of the AWGNC. In order to prove it, we need to introduce the channel resolvability \cite{ref16} which plays a major role in the proof of the converse bound.
	
	\subsection{Channel Resolvability}
	Given a target distribution $P_{Y^{n}}$, the channel resolvability minimizes the size $M$ of the codebook $c^{M}=(c_{1},...,c_{M})$ such that when the codewords are equiprobably selected, the output distribution
	\begin{flalign}
		P_{Y^{n}[c^{M}]} =\frac{1}{M}\sum_{i=1}^{M}W(Y^{n}|c_{i})
		\label{eq6}
	\end{flalign}
	approximates the target distribution $P_{Y^{n}}$ well, apparently, the output distribution $P_{Y^{n}[c^{M}]}$ is an $M$-type. Figure \ref{fig2} shows the main idea of the channel resolvability framework.
	\begin{figure}[H]
		\centering
		\includegraphics[height=2.3cm,width=8.3cm]{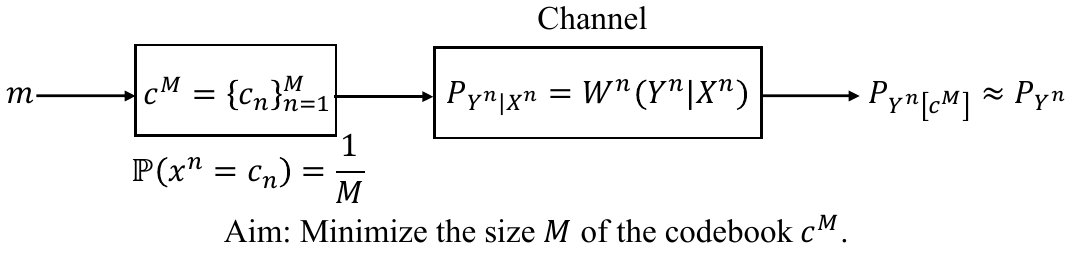}
		\caption{Problem formulation of the channel resolvability.}
		\label{fig2}
	\end{figure}
	\begin{definition}
		A resolvability code 
		for input distribution $P_{X^{n}}$ is called an $(n,M,\xi)$-resolvability code if
		\begin{flalign}
			\limsup_{n\rightarrow \infty}d(P_{Y^{n}},P_{{Y}^{n}[c^{M}]})\leq \xi, \tag{a}
		\end{flalign}
		where $P_{Y^{n}}$ denotes the channel output induced by $P_{X^{n}}$.
	\end{definition}

	The optimal code size 
	of resolvability is defined by
	\begin{flalign}
		& M^{*}(\xi|W^{n})\triangleq \inf\{M\big| (n,M,\xi) \mathrm{-resolvability\ code\ exists} \nonumber\\
		& \mathrm{\ for\  all\ input\ distributions\ } P_{X^{n}} \in\mathcal{P}(\mathcal{X}^{n}(P))  \}.\tag{b}
	\end{flalign}
	
	The channel resolvability capacity of the AWGNC is defined as $\mathrm{C_{RE}(\xi)}=\lim_{n\rightarrow \infty}\frac{1}{n}\log M^{*}(\xi|W^{n}).$
	
	It has been proved by Han and Verd$\acute{u}$ that the converse bound of the RID capacity could be upper bounded by the achievability bound of the channel resolvability capacity (see Figure \ref{fig3})
	\begin{figure} 
		\centering       
		\includegraphics[height=3.5cm,width=6.8cm]{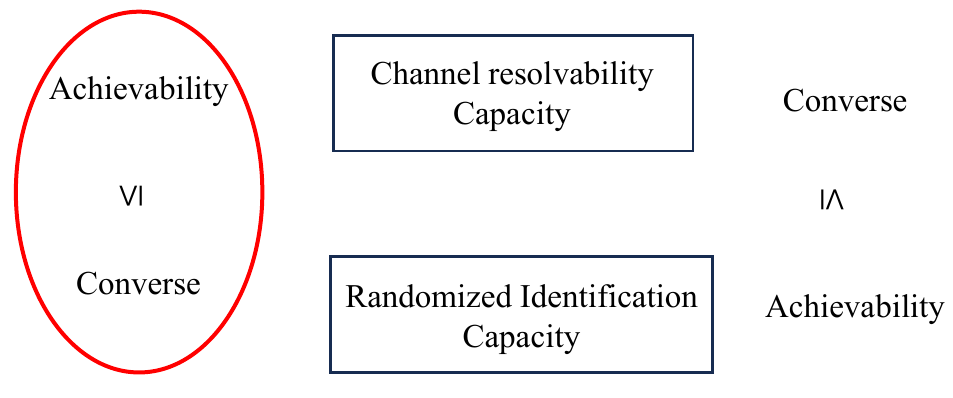}
		\caption{The dual relation between the identification and the resolvability.}   
		\label{fig3}     
	\end{figure}
	
	It is worth mentioning that Han \cite{ref19} pointed out that the channel resolvability capacity equals to the identification capacity of the AWGNC as long as $0<\varepsilon+\delta+\xi<1$. A substitute approach for Han and Verd$\mathrm{\acute{u}}$'s resolvability was subsequently proposed by Ahlswede in a more combinatorial manner. This concept is elaborated upon in \cite{ref30}.
	\subsection{Quantization}
	The minimax converse of identification was given by Watanabe \cite{ref20} for DMCs. The proof relies on the finiteness assumption of the input alphabet. Nevertheless, it becomes impossible to obtain an upper bound on the number of distinct types for continuous input alphabet, as in AWGNC. Therefore, we use quantization method to overcome this difficulty. 
	
	In \cite{ref19}, Han proposed the quantization method to derive the first-order identification capacity $(\ref{eq5})$ in AWGNC. Fig $\ref{fig4}$ shows the basic ideas of Han's quantization method. An alternative to Han's quantization method is suggested in \cite{ref31}, and this approach is employed in \cite{ref28} to determine the capacity of secure identification codes.
	
	Specifically, the input alphabet is 
	$\mathcal{X}^{n}(P)=\{x^{n}\big| ||x^{n}||^{2}\leq nP\}$. 
	Find the minimal hypercube $V_{n}(P)$ with edge length $l_{n}=2\sqrt{nP}$ covering $\mathcal{X}^{n}(P)$ and partition $V_{n}(P)$ into small hypercubes $\Lambda_{n}^{(i)}$ with edge length $\Delta_{n}$ $(i=1,\cdots,k_{n}(P))$. The number of small hypercubes is  $k_{n}(P)=(\frac{l_{n}}{\Delta_{n}})^{n}$. We choose a representative point $u^{n}_{i}$ in each $\Lambda^{(i)}_{n}$ and set 
	\begin{flalign*}
		\mathcal{R}_{n}(P)=\{u^{n}_{1},\cdots,u^{n}_{k_{n}(P)}\}.
	\end{flalign*}
	\begin{figure}
		\centering
		\includegraphics[height=5.3cm,width=7cm]{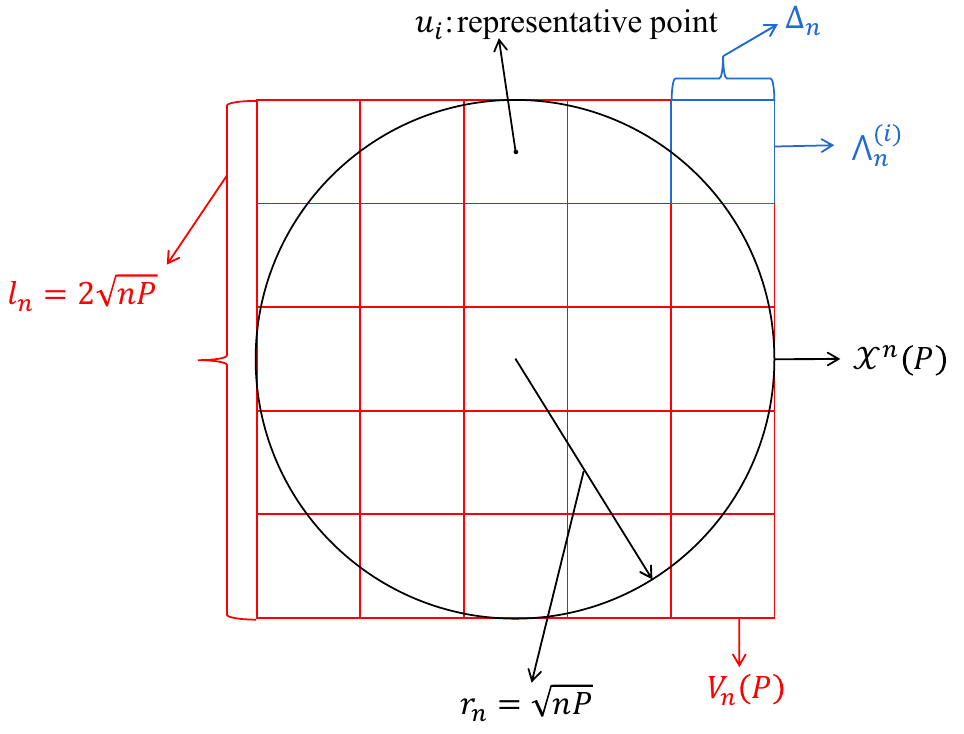}
		\caption{Intuition behind the quantization method.}
		\label{fig4}
	\end{figure}
	For a distribution $Q$ supported on $\mathcal{X}^{n}(P)$, we denote $\bar{Q}$ supported on $\mathcal{R}^{n}(P)$ by
	\begin{equation}
		\bar{Q}(u^{n}_{i})=Q(\Lambda^{(i)}_{n})  (i=1,2,...,k_{n}(P)).
		\label{eq7}
	\end{equation}
	
	When the edge length $\Delta_{n}$ is small enough, the output distribution 
	induced by $\bar{Q}$ approximates the real output distribution induced by $Q$ (see (\ref{eq3})) well. Therefore, the problem boils down to the finite input alphabet $\mathcal{R}_{n}(P)$, which can be solved as that in DMCs.
	
	\section{Second-Order RID capacity of the AWGNC}
	In this section, we obtain the second-order RID capacity of the AWGNC (Theorem 1). In the achievability part, we choose the uniform distribution on the power shell as the input distribution $P_{X^{n}}(x^{n})$, so the components of the output distribution $P_{Y^{n}}(y^{n})$ induced by $P_{X^{n}}(x^{n})$ are not \emph{i.i.d.}, thus is complex to analysis. In order to overcome this difficulty, we extend Hayashi's Theorem \cite{ref18} by utilizing capacity-achieving output distribution $Q_{Y^{n}}^{*}(y^{n})=\mathscr{N}(0,(1+P)\mathrm{I_{n}})$ as the auxiliary
	output distribution instead of $P_{Y^{n}}$. In the converse part, the key idea is to quantize the input alphabet and then convert the problem to a DMC scenario, which is solved by Watanabe \cite{ref20}. This procedure plays an important role in the derivation of the first-order RID capacity of the AWGNC. In order to get the second-order RID capacity, we directly partition the power shell into small sectors. Compared to the partitions on hypercubes in \cite{ref19}, our quantization method is more accurate.

	Let $\mathrm{Q^{-1}}(\cdot)$ be the inverse of the complementary CDF of the standard normal distribution. In \cite{ref20}, the second-order RID capacity of the DMCs is proved to have the same form as the transmission capacity. It was shown by Polyanskiy et.al. \cite{ref32} that the second-order transmission capacity of the AWGNC is
	\begin{flalign*}
		\log \mathrm{M}^{*}_{\mathrm{T}}(\epsilon|W^{n})=n\mathrm{C}(P)-\sqrt{n\mathrm{V}(P)}\mathrm{Q}^{-1}(\epsilon)+\mathrm{O}(\log n),
	\end{flalign*}
	where $\mathrm{M}^{*}_{\mathrm{T}}(\epsilon|W^{n})$ is the optimal code size of the transmission code with the maximal block error rate less than $\epsilon$ and $\mathrm{V}(P)=\log^{2}e\frac{P(P+2)}{2(P+1)^{2}}$ is the channel dispersion.
	
	We would naturally ask: whether the second-order RID capacity and transmission capacity still have the same form in AWGNC ? Our answer is ``yes''.
	
	\noindent \begin{theorem} \label{th1} Given an AWGNC $W^{n}$, when the type-II error $\mathrm{\delta}$ vanishes, $\forall\ 0<\varepsilon<1$, the optimal code size of RID (defined in $(\ref{eq4})$) is
		\begin{flalign}
			\log\log N^{*}(\varepsilon,\delta|W^{n}) = & n\mathrm{C}(P)-\sqrt{n\mathrm{V}(P)}\mathrm{Q^{-1}(\varepsilon)} \nonumber\\
			&+\mathrm{O}(\log n ),\qquad \delta\rightarrow 0.
			\label{eq8}
		\end{flalign}
	\end{theorem}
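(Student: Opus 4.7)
The plan is to prove Theorem~\ref{th1} by establishing matching achievability and converse bounds for $\log\log N^{*}(\varepsilon,\delta|W^{n})$ up to an additive $O(\log n)$ remainder.

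For achievability I would adopt the Ahlswede--Dueck double-exponential randomized construction, but perform the finite blocklength analysis through a strengthened form of Hayashi's error bound \cite{ref18} in which the capacity-achieving output distribution $Q_{Y^{n}}^{*}=\mathscr{N}(0,(1+P)\mathrm{I}_{n})$ replaces $P_{Y^{n}}$ as an auxiliary. Concretely, I would take each encoder $Q_{i}$ to be uniform on an i.i.d.\ codebook of size $M$ drawn from the uniform distribution on the power shell $\{x^{n}:\|x^{n}\|^{2}=nP\}$, and use a threshold decoder on the information density $\iota(x^{n};y^{n})=\log\frac{W^{n}(y^{n}|x^{n})}{Q_{Y^{n}}^{*}(y^{n})}$. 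The refined theorem then upper bounds $\mathrm{P_{I}}$ and $\mathrm{P_{II}}$ by a tail probability of $\iota(X^{n};Y^{n})$ around a threshold $\gamma$ together with a term exponentially small in $M\cdot 2^{-\gamma}$. Choosing $\gamma=nC(P)-\sqrt{nV(P)}\mathrm{Q}^{-1}(\varepsilon)+O(\log n)$ and applying a Berry--Esseen bound to $\iota(X^{n};Y^{n})$---whose summands are not i.i.d.\ because of the spherical support of $X^{n}$, but whose first three moments match the i.i.d.\ Gaussian case up to $O(1/n)$ corrections as in Polyanskiy, Poor and Verd$\mathrm{\acute{u}}$ \cite{ref32}---yields $\log M\ge\gamma$, and the Ahlswede--Dueck lifting $\log N\sim M$ then completes the achievability.

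For the converse I would first invoke the identification-to-resolvability reduction of Han and Verd$\mathrm{\acute{u}}$ \cite{ref16} and Steinberg \cite{ref17} to bound $\log\log N^{*}(\varepsilon,\delta|W^{n})$ by $\log M^{*}(\xi|W^{n})$ plus lower-order terms, choosing $\xi$ with $\varepsilon+\delta+\xi<1$. Since Watanabe's DMC minimax converse \cite{ref20} requires a finite input alphabet, I would discretize the input as in Section~II.C, but replace Han's hypercube partition by a partition of the power shell into spherical sectors of diameter $\Delta_{n}=n^{-\alpha}$ for a suitable $\alpha>0$. The resulting representative set $\mathcal{R}_{n}(P)$ has cardinality of order $\mathrm{poly}(n)^{n}$ rather than $(2\sqrt{nP}/\Delta_{n})^{n}$, which is what allows the number-of-types penalty in Watanabe's bound to shrink from $O(\sqrt{n}\log n)$ down to $O(\log n)$. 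Applying Watanabe's result to the quantized channel and transporting the bound back to the original channel via a Lipschitz estimate on $x^{n}\mapsto W^{n}(\cdot|x^{n})$ (whose Gaussian form gives $d(QW^{n},\bar{Q}W^{n})=O(\sqrt{n}\,\Delta_{n})$ uniformly in $Q$) then closes the converse within the promised $O(\log n)$ window.

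The delicate step will be the converse. The two constraints on $\Delta_{n}$---it must be small enough that the quantization error in variational distance is $o(1)$ (so that $\xi$ can still be kept below $1-\varepsilon-\delta$), yet not so small that $|\mathcal{R}_{n}(P)|$ explodes and inflates the type count---leave only a narrow operating range, and using a spherical-sector partition rather than hypercubes is exactly what widens this range enough to recover the $\sqrt{nV(P)}\mathrm{Q}^{-1}(\varepsilon)$ dispersion term rather than a cruder $O(\sqrt{n}\log n)$ surrogate. On the achievability side the chief technical nuisance is verifying that the non-i.i.d.\ information density concentrates with the same dispersion $V(P)$ as in the i.i.d.\ Gaussian case; I would handle this by conditioning on the radial coordinate and using the classical sphere-to-Gaussian coupling, or alternatively by a direct moment computation of $\iota(X_{1};Y_{1})$ under the marginal of the spherical input.
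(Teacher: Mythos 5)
Your achievability plan is essentially the paper's: a refined Hayashi bound \cite{ref18} with the capacity-achieving output distribution $\mathscr{N}(0,(1+P)\mathrm{I}_{n})$ as auxiliary, uniform input on the power shell, threshold decoding of the information density, and a Berry--Esseen estimate for the non-i.i.d.\ summands (the paper invokes the Berry--Esseen theorem for functions of i.i.d.\ random vectors \cite{ref33,ref34}; your observation that the first three moments of $\tilde{i}(x^{n};Y^{n})$ depend on $x^{n}$ only through $\|x^{n}\|^{2}=nP$ is exactly what makes this step go through). No issues there.

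The converse is where you diverge and where the gap sits. You propose to quantize and then ``apply Watanabe's result to the quantized channel.'' Watanabe's second-order converse \cite{ref20} is stated for a fixed finite input alphabet, and its remainder terms depend on the alphabet size; here the quantized alphabet has cardinality growing like $\exp(\mathrm{poly}(n))$, so the theorem cannot be cited as a black box --- you would have to re-derive it tracking the alphabet-size dependence, and it is not clear the $O(\log n)$ remainder survives. The paper sidesteps this: the quantization is used only inside the identification-to-resolvability reduction (distinct encoders must induce distinct $M_{n}$-type quantized output-approximating distributions, giving $N_{n}\le m_{n}(P)^{M_{n}}$ and hence $\log\log N_{n}\le\log M_{n}+\log\log m_{n}(P)$), after which $\log M^{*}(\xi|W^{n})$ is bounded by Frey's second-order resolvability theorem for continuous-alphabet channels \cite{ref36}; it is Frey's theorem, not a quantized Watanabe converse, that supplies the $-\sqrt{nV(P)}\mathrm{Q}^{-1}(\varepsilon)+O(\log n)$ expansion. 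Relatedly, your diagnosis of the tension governing the cell size is misplaced: the cell count enters only doubly-logarithmically, so it is extremely forgiving --- even exponentially small cells (the paper takes angular width $e^{-n}$, giving $m_{n}(P)=e^{\Theta(n^{2})}$) still yield $\log\log m_{n}(P)=O(\log n)$, and a hypercube partition of comparable fineness would too. The binding constraint is instead that the quantization error $q_{n}$ in variational distance must be $O(\log n/\sqrt{n})$, so that replacing $\mathrm{Q}^{-1}(\varepsilon+q_{n})$ by $\mathrm{Q}^{-1}(\varepsilon)$ costs only $O(\log n)$ after multiplication by $\sqrt{nV(P)}$; with your own estimate $q_{n}=O(\sqrt{n}\,\Delta_{n})$ and $\Delta_{n}=n^{-\alpha}$, this forces $\alpha\ge 1$, not merely ``a suitable $\alpha>0$.'' The sector-versus-hypercube choice is not what rescues the dispersion term.
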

	Compared with the first-order approximation (\ref{eq5}) given by Han \cite{ref19}, 
	we provide a more accurate approximation which gives the exact second-order term in the asymptotic expansion. We divide the proof into two parts: the achievability part and the converse part.
	\\
	
	\noindent\emph{Proof of achievability.} The key point is the extension of Hayashi's Theorem 1 \cite{ref18} by utilizing the auxiliary
	output distribution. The achievability part is a result of Lemma 1.
	\noindent \begin{lemma}Given an arbitrary channel $W^{n}$ and input distribution $P_{X^{n}}$, denote by $P_{Y^{n}}$ the output distribution induced by $P_{X^{n}}$. Assume that real numbers $c, c', d, d', \tau, \zeta > 0$ satisfy
		\begin{flalign}
			\zeta\log(\frac{1}{\tau}&-1)> \log2+1, 0<\tau<\frac{1}{3}, 0<\zeta< 1,\nonumber\\
			&1>\frac{1}{c}+\frac{1}{c'}, f:=1-\frac{1}{d}-\frac{1}{d'}>0.\nonumber
		\end{flalign}
		Then, for any integer $M>0$ and real number $K>0$, there exists an $(N,\varepsilon,\delta)$-$\mathrm{ID}$ code such that
		\begin{flalign}
			\varepsilon&\leq cd\mathrm{Pr}(\tilde{i}(X;Y)\leq\log K),\\
			&\delta\leq\zeta+c'd'\frac{1}{K}\lceil \frac{M}{f} \rceil,\\
			&N=\lfloor\frac{e^{\tau M}}{Me}\rfloor 
			\label{eq9,10,11}
		\end{flalign}
		provided that
		\begin{flalign}
			cd\mathrm{Pr}(\tilde{i}(X;Y)\leq\log K)+c'd'\frac{1}{K}\lceil\frac{M}{f}\rceil<1,
			\label{eq12}
		\end{flalign}
		where $\tilde{i}(X;Y)\triangleq\log\frac{W(Y|X)}{Q_{Y}(Y)}$ and $Q_{Y}(Y)$ is an auxiliary distribution.
	\end{lemma}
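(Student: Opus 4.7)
The plan is to execute a random coding argument in the style of Hayashi \cite{ref18}, but carefully routing every change-of-measure step through the auxiliary distribution $Q_{Y}$ rather than the induced output distribution $P_{Y^{n}}$. First I would set up the ensemble: for each identity $i\in[N]$, draw $M$ i.i.d.\ codewords $c_{i,1},\dots,c_{i,M}$ from $P_{X^{n}}$, and let the randomized encoder $Q_{i}$ be uniform on this sub-codebook. Then $Q_{i}W^{n}$ is exactly the $M$-type mixture $\frac{1}{M}\sum_{j}W^{n}(\cdot\mid c_{i,j})$, so one can invoke Han--Verd\'u channel resolvability to keep it close to $P_{Y^{n}}$ on average. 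The decoding region will be the information-density sub-codebook sum
\begin{flalign*}
\mathcal{D}_{i}=\bigcup_{j=1}^{M}\bigl\{y^{n}:\tilde{i}(c_{i,j};y^{n})>\log K\bigr\},
\end{flalign*}
built from the auxiliary kernel $\tilde{i}(x;y)=\log\frac{W(y|x)}{Q_{Y}(y)}$.

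For the type-I error, I would show that the expected value of $Q_{i}W^{n}(\mathcal{D}_{i}^{c})$ over the ensemble is at most $\Pr(\tilde{i}(X;Y)\le\log K)$, since a codeword $c_{i,j}$ that delivers $y^{n}$ with $\tilde{i}(c_{i,j};y^{n})>\log K$ automatically places $y^{n}$ in $\mathcal{D}_{i}$. For the type-II error I would fix $j\ne i$ and bound the \emph{unconditional} probability $P_{Y^{n}}(\mathcal{D}_{j})$ by a union bound combined with the likelihood-ratio change of measure: each event $\{\tilde{i}(c_{j,k};y^{n})>\log K\}$ carries $P_{Y^{n}}$-mass at most $1/K$ under $Q_{Y}$, and the factor $\lceil M/f\rceil$ together with the $1-1/d-1/d'$ threshold absorbs the resolvability gap between $Q_{i}W^{n}$ and $P_{Y^{n}}$, producing the bound $\zeta+c'd'\frac{1}{K}\lceil M/f\rceil$ after the mixture substitution. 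This is precisely the step where $Q_{Y}$ must be used: because for the power-shell input the output is \emph{not} i.i.d., the natural density $\tilde{i}$ would be unwieldy, whereas against the Gaussian $Q_{Y}^{*}$ the density decomposes into a sum of $n$ simple terms amenable to the Berry--Esseen-type CLT estimate needed later in Theorem~\ref{th1}.

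Finally I would run the standard Markov/expurgation argument to convert expected bounds into a deterministic $(N,\varepsilon,\delta)$ code. The two inequalities $\frac{1}{c}+\frac{1}{c'}<1$ and $\frac{1}{d}+\frac{1}{d'}<1$ are exactly the room needed to apply two conditional Markov inequalities in sequence: one to thin the $NM$ ensemble to codewords whose individual type-I probability is at most $cd$ times the average, a second to control type-II over all pairs $(i,j)$, and a pigeonhole keeping a fraction $1-\frac{1}{c}-\frac{1}{c'}$ of the identities. The surviving count is $\lfloor e^{\tau M}/(Me)\rfloor$, which comes from the standard combinatorial lemma (see \cite{ref18}) that picks $N$ sub-codebooks of size $M$ from an ensemble of size $\approx\tau M$ good codewords with pairwise overlap at most $\tau M$; the hypothesis $\zeta\log(\frac{1}{\tau}-1)>\log 2+1$ is what guarantees this packing is feasible with the prescribed $\zeta$ slack.

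The main obstacle is bookkeeping, not novelty: I must make sure that $Q_{Y}$ appears in the decoding rule and in the type-II change of measure, while $P_{Y^{n}}$ continues to be the target of the resolvability approximation that controls $Q_{i}W^{n}(\mathcal{D}_{j})-P_{Y^{n}}(\mathcal{D}_{j})$. Getting these two distributions in the right place in every inequality — and verifying that feasibility condition (\ref{eq12}) is exactly what makes the two Markov expurgations compatible — is the delicate part. Once the structure is set, the numerical constants $c,c',d,d',\tau,\zeta$ flow from Hayashi's original argument with essentially no change.
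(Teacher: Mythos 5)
The paper never actually proves Lemma~1 --- it is stated as a ``refined version'' of Hayashi's Theorem~1 in \cite{ref18} and the proof is omitted --- so your sketch can only be judged against Hayashi's original argument, whose overall architecture (per-codeword information-density decoding regions, union-bound decoder, double Markov expurgation, Ahlswede--Dueck subset packing) you have correctly identified. Two things in your write-up, however, would not survive being written out. First, you conflate two incompatible constructions: you begin by drawing $M$ i.i.d.\ codewords \emph{independently for each identity} $i$, but the stated bounds only arise from the common-pool construction in which a single pool of $T=\lceil M/f\rceil$ codewords is drawn and the $N$ identities are \emph{overlapping} size-$M$ subsets $A_1,\dots,A_N$ of that pool with pairwise intersection at most $\zeta M$; the $\zeta$ in the type-II bound is exactly that overlap fraction, the $\lceil M/f\rceil$ is the pool size entering the union bound, and $N=\lfloor e^{\tau M}/(Me)\rfloor$ is the packing count. (With independent codebooks a union bound over all $N^{2}$ pairs is hopeless when $N$ is doubly exponential.) Relatedly, you describe the packing as choosing size-$M$ subsets ``from an ensemble of size $\approx\tau M$ good codewords,'' which is impossible since $\tau M<M$; the surviving pool must have size at least $M$, which is what the expurgation fraction $f=1-\frac{1}{d}-\frac{1}{d'}$ applied to $\lceil M/f\rceil$ guarantees.

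Second, and more importantly, you gloss over the one step that is genuinely new relative to Hayashi, namely where the auxiliary distribution enters the type-II analysis. The change of measure on $\{y:\tilde i(x;y)>\log K\}$ controls the $Q_{Y^{n}}$-measure of each per-codeword region (at most $1/K$), hence of $\mathcal{D}_{j}$ by the union bound; but the type-II error is $\frac{1}{M}\sum_{k\in A_i\setminus A_j}W^{n}(\mathcal{D}_{j}\mid c_{k})$, whose expectation over the random pool is governed by $P_{Y^{n}}$, not $Q_{Y^{n}}$: one obtains $\mathbb{E}\bigl[W^{n}(\mathcal{D}(c_{l})\mid c_{k})\bigr]\leq\frac{1}{K}\,\mathbb{E}\bigl[\tfrac{dP_{Y^{n}}}{dQ_{Y^{n}}}\bigr]$, so a bound on the Radon--Nikodym derivative $dP_{Y^{n}}/dQ_{Y^{n}}$ (for the uniform-on-shell input against the i.i.d.\ Gaussian $Q^{*}_{Y^{n}}$ this is $O(\sqrt{n})$, cf.\ \cite{ref32,ref33}) must be supplied and absorbed into the constants. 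Your appeal to ``Han--Verd\'u resolvability absorbing the gap'' does not do this: resolvability yields an additive variational-distance error term, not the multiplicative $c'd'$ structure, and it plays no role in Hayashi's type-II bound. Since the lemma is asserted for an arbitrary channel, input distribution and auxiliary $Q_{Y}$, this bridge between $P_{Y^{n}}$ and $Q_{Y^{n}}$ is precisely the content of the ``refinement,'' and any complete proof has to confront it explicitly.
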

	
	In the following, we choose appropriate parameters in Lemma 1 to prove the achievability part.
	
	Let $W^{n}$ be the length-$n$ AWGNC.  
	Choose $Q_{Y^{n}}^{*}$ as the auxiliary output distribution, namely,
	\begin{flalign}
		Q_{Y^{n}}^{*}(y^{n})=
		\mathscr{N}(0,(1+P)\mathrm{I_{n}}).
		\label{eq13}
	\end{flalign}

	Applying	the Central Limit Theorem \cite{ref33}, we have
	\begin{flalign}
		\frac{\tilde{i}(x^{n};Y^{n})-n\mathrm{C}(P)}{\sqrt{n\mathrm{V}(P)}}&=\frac{\log e}{2\sqrt{n\mathrm{V}(P)}}[\frac{||Y^{n}||^{2}}{1+P}-||Z^{n}||^{2}]\nonumber\\
		&\Rightarrow\mathscr{N}(0,1),
		\label{eq14}
	\end{flalign}
	where $\tilde{i}(x^{n};Y^{n})=\log \frac{W^{n}(Y^{n}|x^{n})}{Q_{Y^{n}}^{*}(Y^{n})}$ and ``$\Rightarrow$'' means weak convergence.
	
	Set $c=d=1+\frac{2}{n}, c'=d'=n+2, \tau=\frac{1}{n+2}, \zeta=\frac{1+\log2}{\log n}$. For $K>0$, we apply Lemma 1 by setting $M=\lceil\frac{K}{(n+2)^{4}}\rceil$, then, there exist a constant $F>0$ and a sequence of $(N^{*}(\varepsilon,\delta|W^{n}),\varepsilon_{n},\delta_{n})$-ID codes such that
	\begin{flalign}
		&\log\log N^{*}(\varepsilon,\delta|W^{n})\geq \log K-F\log n ,\nonumber\\
		&\varepsilon_{n}\leq (1+\frac{2}{n})^{2}\mathrm{Pr}(\tilde{i}(x^{n};Y^{n})\leq \log K),\nonumber\\
		&\delta_{n}\leq \frac{1+\log 2}{\log n}+\frac{2}{n+2},\nonumber
	\end{flalign}
	if
	\begin{flalign}
		(1+\frac{2}{n})^{2}\mathrm{Pr}(\tilde{i}(x^{n};Y^{n})\leq \log K)+\frac{2}{n+2}\leq 1.\nonumber
	\end{flalign}
	
\begin{lemma}
	 (Berry-Esseen Theorem for Functions of i.i.d. Random Vectors \cite{ref33,ref34}). Assume that $X_1^k, \ldots, X_n^k$ are $\mathbb{R}^k$-valued, zero-mean, i.i.d. random vectors with positive definite covariance $\operatorname{Cov}\left(X_1^k\right)$ and finite third absolute moment $\xi:=\mathrm{E}\left[\left\|X_1^k\right\|_2^3\right]$. Let $\mathbf{f}(\mathbf{x})$ be a vector-valued function from $\mathbb{R}^k$ to $\mathbb{R}^l$ that is also twice continuously differentiable in a neighborhood of $\mathbf{x}=\mathbf{0}$. Let $\mathbf{J} \in \mathbb{R}^{l \times k}$ be the Jacobian matrix of $\mathbf{f}(\mathbf{x})$ evaluated at $\mathbf{x}=\mathbf{0}$, i.e., its elements are
\begin{flalign*}
	J_{i j}=\left.\frac{\partial f_i(\mathbf{x})}{\partial x_j}\right|_{\mathbf{x}=\mathbf{0}},
\end{flalign*}
where $i=1, \ldots, l$ and $j=1, \ldots, k$. Then, for every $n \in \mathbb{N}$, we have
\begin{flalign*}
	\sup _{\mathscr{C} \in \mathfrak{C}_l}\left|\operatorname{Pr}\left(\mathrm{f}\left(\frac{1}{n} \sum_{i=1}^n X_i^k\right) \in \mathscr{C}\right)-\operatorname{Pr}\left(Z^l \in \mathscr{C}\right)\right| \leq \frac{c}{\sqrt{n}}
\end{flalign*}
where $c>0$ is a finite constant, and $Z^l$ is a Gaussian random vector in $\mathbb{R}^l$ with mean vector and covariance matrix respectively given as
\begin{flalign*}
	\mathrm{E}\left[Z^l\right]=\mathbf{f}(\mathbf{0}), \quad \text { and } \quad \operatorname{Cov}\left(Z^l\right)=\frac{\mathbf{J} \operatorname{Cov}\left(X_1^k\right) \mathbf{J}^{\prime}}{n} .
\end{flalign*}
\end{lemma}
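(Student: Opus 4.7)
\emph{Proof proposal.} The plan is to reduce the claim to a standard multivariate Berry--Esseen bound on convex sets (applied to $\bar S_n := \tfrac{1}{n}\sum_{i=1}^{n} X_i^k$) and then transport that estimate through $\mathbf{f}$ by a first-order Taylor, i.e.\ ``delta-method'', argument. Writing $\mathbf{f}(\mathbf{x}) = \mathbf{f}(\mathbf{0}) + \mathbf{J}\mathbf{x} + \mathbf{r}(\mathbf{x})$ with $\|\mathbf{r}(\mathbf{x})\| \le L\|\mathbf{x}\|^{2}$ on a neighborhood of $\mathbf{0}$---which is exactly where the $C^{2}$ hypothesis is used---and letting $G^k \sim \mathscr{N}(\mathbf{0},\Sigma/n)$ with $\Sigma = \operatorname{Cov}(X_1^k)$, the task reduces to comparing $\Pr(\mathbf{f}(\bar S_n)\in\mathscr{C})$ with $\Pr(\mathbf{f}(\mathbf{0})+\mathbf{J}G^k\in\mathscr{C})$ uniformly in $\mathscr{C}\in\mathfrak{C}_l$, since $\mathbf{f}(\mathbf{0}) + \mathbf{J}G^k$ has exactly the law of $Z^l$.

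The first step would be to invoke Bentkus' sharp multivariate Berry--Esseen estimate: for every convex $\mathscr{B}\subset\mathbb{R}^k$,
\begin{flalign*}
|\Pr(\bar S_n \in \mathscr{B}) - \Pr(G^k\in \mathscr{B})| \le C(k)\,\xi/\sqrt{n},
\end{flalign*}
with a constant depending only on $k$ and $\Sigma$. Because the preimage of a convex set in $\mathbb{R}^l$ under the linear map $\mathbf{J}:\mathbb{R}^k\to\mathbb{R}^l$ is again convex, this bound directly controls the linearized event $\{\mathbf{J}\bar S_n + \mathbf{f}(\mathbf{0})\in\mathscr{C}\}$ against its Gaussian analogue at rate $1/\sqrt n$.

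The second step is to absorb the Taylor remainder. On the good event $\{\|\bar S_n\|\le\eta\}$ with, say, $\eta = n^{-1/4}$, one has the sandwich
\begin{flalign*}
\{\mathbf{J}\bar S_n + \mathbf{f}(\mathbf{0}) \in \mathscr{C}^{-\rho}\} \subseteq \{\mathbf{f}(\bar S_n)\in\mathscr{C}\} \subseteq \{\mathbf{J}\bar S_n + \mathbf{f}(\mathbf{0}) \in \mathscr{C}^{+\rho}\},
\end{flalign*}
with $\rho = L\eta^{2} = L/\sqrt{n}$, where $\mathscr{C}^{\pm\rho}$ are the inner/outer $\rho$-parallel sets of $\mathscr{C}$ (still convex). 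Chebyshev yields $\Pr(\|\bar S_n\|>\eta)\le C/\sqrt{n}$, and Nazarov's Gaussian anti-concentration inequality gives $\Pr(Z^l\in\mathscr{C}^{+\rho}\setminus\mathscr{C}^{-\rho})\le C'\rho \le C'/\sqrt{n}$ for every convex $\mathscr{C}$ (restricting, if $\mathbf{J}\Sigma\mathbf{J}^{\prime}$ is singular, to its range). Summing the two Bentkus comparisons with the tail bound and the anti-concentration term produces the claimed $c/\sqrt{n}$ estimate, uniformly over $\mathscr{C}\in\mathfrak{C}_l$.

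The main obstacle is making the uniformity in $\mathscr{C}$ airtight in the delta step: one must check that $\rho$ may be chosen independently of $\mathscr{C}$, that $\mathscr{C}^{\pm\rho}$ remain convex so that Bentkus continues to apply, and that Nazarov's constant depends only on the Gaussian covariance and not on $\mathscr{C}$. A cleaner but less elementary alternative is to smooth indicators of $\mathbf{f}^{-1}(\mathscr{C})$ and apply the Bhattacharya--Ranga Rao smooth-function Berry--Esseen expansion; that route trades the parallel-set bookkeeping for a bound on Gaussian boundary measure of $\partial \mathbf{f}^{-1}(\mathscr{C})$, but either way the $C^{2}$ hypothesis on $\mathbf{f}$ is exactly what allows the quadratic remainder to be absorbed at the $1/\sqrt{n}$ scale.
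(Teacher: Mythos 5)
First, note that the paper does not prove this statement at all: Lemma~2 is imported verbatim from \cite{ref33,ref34} and used as a black box, so your reconstruction is being compared against a citation rather than an argument. Your skeleton --- Bentkus' convex-sets Berry--Esseen for $\bar S_n$, a first-order Taylor/delta-method transfer through $\mathbf{f}$, a truncation of $\|\bar S_n\|$, and Gaussian anti-concentration on parallel bodies --- is indeed the skeleton of the proof in the cited references, and steps such as ``preimages of convex sets under $\mathbf{J}$ are convex'' are used correctly.

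There is, however, a genuine quantitative gap in the remainder-absorption step, and it is exactly the hard part of this lemma. The limiting Gaussian $Z^l$ has covariance $\mathbf{J}\operatorname{Cov}(X_1^k)\mathbf{J}'/n$, i.e.\ standard deviations of order $n^{-1/2}$, so Nazarov's (or Ball's) anti-concentration constant for $Z^l$ scales like $\sqrt{n}$, not like an absolute constant: the Gaussian measure of a shell of width $\rho$ around a convex boundary is $O(\rho\sqrt{n})$, not $O(\rho)$. With your choice $\eta=n^{-1/4}$ and $\rho=L\eta^2=L/\sqrt{n}$, the shell has width comparable to the standard deviation of $Z^l$ and therefore carries $\Theta(1)$ probability (take $l=1$, $\mathscr{C}=(-\infty,0]$, $Z\sim\mathscr{N}(0,1/n)$: the shell $[-L/\sqrt{n},L/\sqrt{n}]$ has probability $\Pr(|\mathscr{N}(0,1)|\le L)$). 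No single truncation level repairs this: making the (normalized) shell width $O(n^{-1/2})$ forces $\eta=O(n^{-1/2})$, at which point $\Pr(\|\bar S_n\|>\eta)=\Theta(1)$. The $1/\sqrt{n}$ rate is true only because the quadratic remainder $L\|x\|^2$ is weighted against the Gaussian decay $e^{-c n\|x\|^2}$; to capture this you must layer the truncation over annuli $\{j/\sqrt{n}\le\|\bar S_n\|<(j+1)/\sqrt{n}\}$, use a shell of width $L(j+1)^2/\sqrt{n}$ (in normalized coordinates) on the $j$-th layer, and sum the resulting bounds $O\bigl(e^{-cj^2}(j+1)^2/\sqrt{n}\bigr)$ over $j$ --- while being careful that the Berry--Esseen error is not re-incurred once per layer. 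This trade-off between layer probability and layer-dependent shell width is the essential content of the proof in \cite{ref33}, and it is missing from your sketch; as written, your argument only yields an $O(1)$ (or, with $\eta\sim\sqrt{\log n/n}$, an $O(\log n/\sqrt{n})$) bound. A minor additional caveat: both your argument and the quoted statement implicitly require $\mathbf{J}\operatorname{Cov}(X_1^k)\mathbf{J}'$ to be nonsingular, since anti-concentration (and indeed the conclusion itself) can fail when $\mathbf{J}$ is rank-deficient.
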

	
	Denote for arbitrary $\kappa$
	\begin{flalign*}
		\log K=n\mathrm{C}(P)-\kappa\sqrt{n\mathrm{V}(P)}.
	\end{flalign*}
	
It is easily to check that for every $x^{n} \in \mathbb{R}^n$ such that $\|x^{n}\|_2^2=n P$, the moments of $\tilde{i}(x^{n};Y^{n})$ satisfy that
\begin{flalign*}
		\mathbb{E}\left[\frac{1}{n} \sum_{i=1}^n \log \frac{W\left(Y_i \mid x_i\right)}{Q_Y^*\left(Y_i\right)}\right]&=\mathrm{C}(P),\\
		\operatorname{Var}\left[\frac{1}{n} \sum_{i=1}^n \log \frac{W\left(Y_i \mid x_i\right)}{Q_Y^*\left(Y_i\right)}\right]&=\frac{\mathrm{V}(P)}{n},\\
		\mathbb{E}\big|\frac{1}{n} \sum_{i=1}^n \log \frac{W\left(Y_i \mid x_i\right)}{Q_Y^*\left(Y_i\right)}\big|^{3} &< +\infty.
\end{flalign*}

	According to Lemma 2, we have
	\begin{flalign}
		\Big| \mathrm{Pr}(\tilde{i}(x^{n};Y^{n})\leq \log K)-\mathrm{Q}(\kappa)\Big| \leq \frac{B}{\sqrt{n}},
		\label{eq15}
	\end{flalign}
	where $B>0$ is a finite constant.
	
	For sufficiently large $n$, let
	\begin{flalign*}
		\kappa=\mathrm{Q}^{-1}((1+\frac{2}{n})^{-2}\varepsilon-\frac{B}{\sqrt{n}}).
	\end{flalign*}
	
	Then, from (\ref{eq15}) we obtain
	\begin{flalign}
		&\varepsilon_{n}\leq (1+\frac{2}{n})^{2}\mathrm{Pr}(\tilde{i}(x^{n};Y^{n})\leq \log K)\leq \varepsilon. \label{eq16}
	\end{flalign}
	
	Hence, 
	\begin{flalign}
		\log\log N^{*}(\varepsilon,\delta|W^{n})\geq& n\mathrm{C}(P)-\sqrt{n\mathrm{V}(P)}\mathrm{Q}^{-1}(\varepsilon)\nonumber\\
		&+\mathrm{O}(\log n),\quad \delta\rightarrow 0.\label{eq17}
	\end{flalign}
	\\\	
	\noindent \emph{Proof of converse.} In this part, we develop a finer quantization method depicted in Fig \ref{fig5}. 
	\begin{figure}[H]
		\centering
		\includegraphics[height=5.5cm,width=7.7cm]{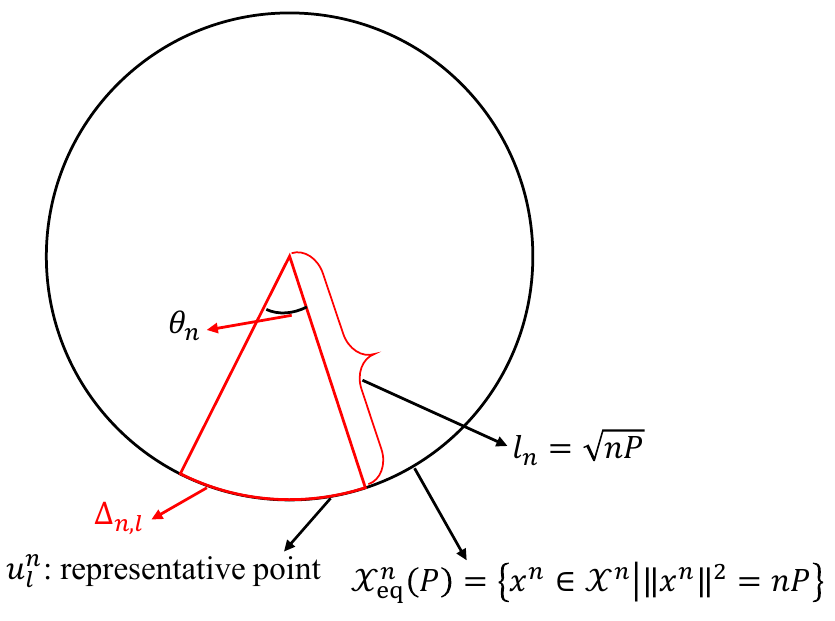}
		\caption{A finer quantization method.}
		\label{fig5}
	\end{figure}
	Let us provide some intuition for the quantization method above. First, we partition the power shell $\mathcal{X}_{\mathrm{eq}}^{n}(P)$ into small sectors $\Delta_{n,l} (l=1,\cdots,m_{n}(P))$ with angle $\theta_{n}$. 
	Then, let
	\begin{flalign}
		\mathcal{V}_{n}(P)=\{u^{n}_{1},\cdots,u^{n}_{m_{n}(P)}\}\nonumber
	\end{flalign}
	and denote $\bar{Q}$ supported on $\mathcal{V}_{n}(P)$ by
	\begin{flalign}
		\bar{Q}(u^{n}_{l})=Q(\Delta_{n,l})  (l=1,2,...,m_{n}(P)),
		\tag{c}
	\end{flalign}
	in other words, $\bar{Q}$ is obtained by concentrating the mass of $Q$ in $\Delta_{n,l}$ to the representative point $u^{n}_{l}$.

	When the 
	angel $\theta_{n}$ is small, the output distribution induced by $\bar{Q}$ approximates the real output induced by $Q$ (see (\ref{eq3}) by replacing $\mathcal{X}^{n}(P)$ by $\mathcal{X}^{n}_{\mathrm{eq}}(P)$) well. 
	Fig \ref{fig5} 
	proposes a finer quantization which leads to the second-order RID capacity.
	
	\begin{remark}
		Define an $n$-dimensional spherical coordinate system determined by angles $\phi_{1},\cdots,\phi_{n-1}$ and radius $l_{n}$,
		\begin{flalign*}
			\begin{cases}
				x_{1}=l_{n}\cos\phi_{1}\\
				x_{2}=l_{n}\sin\phi_{1}\cos\phi_{2}\\
				\cdots\\
				x_{n-2}=l_{n}\sin\phi_{1}\cdots\sin\phi_{n-2}\cos\phi_{n-1}\\
				x_{n-1}=l_{n}\sin\phi_{1}\cdots\sin\phi_{n-2}\sin\phi_{n-1}
			\end{cases}
		\end{flalign*}
		where $0\leq\phi_{m}\leq\pi(1\leq m\leq n-2)$, $0\leq\phi_{n-1}\leq 2\pi$.
	\end{remark}
	Then, the number of the sectors is
	\begin{flalign}
		m_{n}(P)=(\frac{\pi}{\theta_{n}})^{n-2}\frac{2\pi}{\theta_{n}}.\label{eq18}
	\end{flalign}
		\noindent$\mathbf{Step}$ $\mathbf{1.}$ Let $l_{n}=\sqrt{nP}$ be the radius of the power shell $\mathcal{X}_{\mathrm{eq}}^{n}(P)$ 
		and substitute $\theta_{n}=\exp(-n)$ 
		into $(\ref{eq18})$, we have
		\begin{flalign}
			\frac{\log\log m_{n}(P)}{\log n}\rightarrow C,\label{eq19}
		\end{flalign}
		where $C>0$ is a constant.
		
		Here, we give a brief proof of (\ref{eq19}).
		
		Substituing $\theta_{n}$ into $(\ref{eq18})$, so that $m_{n}(P)$ can be calculated as
		\begin{flalign}
			m_{n}(P)=\frac{2\pi^{n-1}}{(\exp(-n))^{n-1}}. \label{eq111}
		\end{flalign}
		
		The result shown in $(\ref{eq19})$ immediately follows by $(\ref{eq111})$.
		
		\noindent$\mathbf{Step}$ $\mathbf{2.}$ 
		Using the quantization distribution $\bar{Q}(\cdot)$ given by $\mathrm{(c)}$ and the Pinsker's inequality, we can show that
		\begin{flalign}
			d(QW^{n},\bar{Q}W^{n})\leq \sqrt{P}n^{3/2} e^{-n}.\label{eq20}
		\end{flalign}
		
		Now, we give the proof of $(\ref{eq20})$. 
		
		For any $x^{n}, u^{n}_{l} \in \Delta_{n,l}$, from  the definition of the KL divergence between two multivariate Gaussian distributions, we obtain
		\begin{flalign}
			&D(W^{n}(\cdot|x^{n})||W^{n}(\cdot|u^{n}_{l}))= \frac{1}{2} n ||x^{n}-u^{n}_{l}||^{2}\nonumber\\
			&\leq \frac{n(n-1)}{2} (\sqrt{nP}e^{-n})^{2}. \label{eq21}
		\end{flalign}
		
		By the Pinsker's inequality \cite{ref35} and the triangle inequality, we have
		\begin{flalign}
			&d(QW^{n},\bar{Q}W^{n})=2\sup_{\mathcal{B}\subset\mathbb{R}^{n}}|QW^{n}(\mathcal{B})-\bar{Q}W^{n}(\mathcal{B})| \nonumber\\
			&\leq 2\sum_{l=1}^{m_{n}(P)}\sum_{x^{n}\in\Delta_{n,l}} |W^{n}(\mathcal{B}|x^{n})-W^{n}(\mathcal{B}|u^{n}_{l})|Q(x^{n}) \nonumber\\
			&\leq  2\sum_{l=1}^{m_{n}(P)} \frac{1}{2}\sqrt{2D(W^{n}(\cdot|x^{n})||W^{n}(\cdot|u^{n}_{l}))}\bar{Q}(u^{n}_{l}) \nonumber\\
			&\leq \sqrt{P}n^{3/2}e^{-n}. \label{eq22}
		\end{flalign} 
		\noindent$\mathbf{Step}$ $\mathbf{3.}$ Next, we investigate the relation between  ID and  resolvability \cite{ref14, ref16, ref17, ref18, ref19, ref20}. We have the following lemma.
		\begin{lemma}For arbitrary $0<\varepsilon,\ \delta,\ \xi<1$ with $0<\varepsilon+\delta+\xi+\sqrt{P}n^{3/2}e^{-n}<1$, we have
			\begin{flalign}
				\log\log N^{*}(\varepsilon,\delta|W^{n})\leq \log M^{*}(\xi|W^{n})+\mathrm{O}(\log n),
				\label{eq23}
			\end{flalign}
			where $N^{*}(\varepsilon,\delta|W^{n})$ and $M^{*}(\xi|W^{n})$ are the optimal code size of the $(\varepsilon,\delta)$-$\mathrm{ID}$ code and the $\xi$-resolvability code, respectively.
		\end{lemma}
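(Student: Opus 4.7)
\emph{Proof plan.} My strategy is the classical Han--Verd\'u style reduction from identification to resolvability, but executed on the quantized input alphabet $\mathcal{V}_n(P)$ supplied by the finer angular partition of Fig.~\ref{fig5}. Starting from an arbitrary $(n,N,\varepsilon,\delta)$-ID code $\{(Q_i,\mathcal{D}_i)\}_{i=1}^{N}$, I would first, for every $i$, invoke the $\xi$-resolvability of $W^n$ to produce a codebook $c_i^{M}=(c_{i,1},\ldots,c_{i,M})\subset \mathcal{X}^n(P)$ of size $M=M^*(\xi|W^n)$ whose uniform mixture $P_{c_i^M}W^n:=\tfrac{1}{M}\sum_{k=1}^{M} W^n(\cdot|c_{i,k})$ lies within variational distance $\xi$ of $Q_iW^n$.

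Because the codewords live in the continuum, I would next quantize each $c_{i,k}$ to the representative $\tilde c_{i,k}$ of the sector $\Delta_{n,\ell}$ that contains it. The empirical distribution of the quantized codewords is exactly $\overline{P_{c_i^M}}$ in the sense of (c), so the same Pinsker-based computation that yielded (\ref{eq20}) gives
\begin{flalign*}
d\bigl(P_{c_i^M}W^n,\; P_{\tilde c_i^M}W^n\bigr)\;\le\;\sqrt{P}\,n^{3/2}e^{-n}\;=:\;\epsilon_n .
\end{flalign*}
Writing $\tilde P_i := P_{\tilde c_i^M}W^n$ and applying the triangle inequality we obtain $d(Q_iW^n,\tilde P_i)\le \xi+\epsilon_n$.

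The core combinatorial step is injectivity: if $i\neq j$ produced the same empirical $M$-type on $\mathcal{V}_n(P)$, the induced mixtures would satisfy $\tilde P_i=\tilde P_j$, and the triangle inequality would give $d(Q_iW^n,Q_jW^n)\le 2(\xi+\epsilon_n)$. On the other hand, the ID conditions $Q_iW^n(\mathcal{D}_i)\ge 1-\varepsilon$ and $Q_jW^n(\mathcal{D}_i)\le \delta$ force $d(Q_iW^n,Q_jW^n)\ge 2(1-\varepsilon-\delta)$, contradicting the hypothesis $\varepsilon+\delta+\xi+\epsilon_n<1$. Hence the assignment $i\mapsto \overline{P_{c_i^M}}$ is injective, so $N$ is bounded by the number of $M$-types on $\mathcal{V}_n(P)$, which is $\binom{M+m_n(P)-1}{M}$. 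By (\ref{eq111}) we have $\log m_n(P)=O(n^2)$, whereas $\log M^*(\xi|W^n)$ grows only linearly in $n$, so $M\le m_n(P)$ eventually and the crude estimate $\binom{M+m_n(P)-1}{M}\le (2e\,m_n(P))^{M}$ applies. Two successive logarithms give $\log\log N^*(\varepsilon,\delta|W^n)\le \log M^*(\xi|W^n)+\log\log m_n(P)+O(1)=\log M^*(\xi|W^n)+O(\log n)$ by (\ref{eq19}).

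The step I expect to require the most care is the injectivity/triangle argument: I need the slack in the hypothesis $\varepsilon+\delta+\xi+\epsilon_n<1$ to line up exactly with the two $(\xi+\epsilon_n)$ contributions produced by the two applications of the triangle inequality, so no constants can be lost in the intermediate bounds. A secondary technical nuisance is the $\limsup$ in the definition of an $(n,M,\xi)$-resolvability code, which constrains the estimate to sufficiently large $n$; this is harmless because any finite-$n$ correction gets absorbed into the $O(\log n)$ term.
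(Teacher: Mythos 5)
Your proposal is correct and follows essentially the same route as the paper's proof: the same resolvability approximation of each $Q_iW^n$ by an $M^*(\xi|W^n)$-type mixture, the same sector-quantization step invoking (\ref{eq20}), the same triangle-inequality/injectivity contradiction against $\varepsilon+\delta+\xi+\sqrt{P}n^{3/2}e^{-n}<1$, and the same count of $M$-types on $\mathcal{V}_n(P)$ combined with (\ref{eq19}). The only cosmetic differences are that you phrase the $M$-type distributions as explicit codebooks and use the exact binomial count $\binom{M+m_n(P)-1}{M}$ where the paper uses the cruder bound $m_n(P)^{M}$; both yield the same $\mathrm{O}(\log n)$ term.
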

		\begin{proof}
			
			Let $N_{n}=N^{*}(\varepsilon,\delta|W^{n}), M_{n}=M^{*}(\xi|W^{n})$, by definition,
			there exist an $(n,N_{n},\varepsilon_{n},\delta_{n})$-ID code and $M_{n}$-type probability distributions $\tilde{Q}_{j} (j=1,\cdots,N_{n})$ supported on $\mathcal{X}_{\mathrm{eq}}^{n}(P)$ satisfying 
			\begin{flalign}
				&\varepsilon_{n}\leq\varepsilon,\quad \delta_{n}\leq\delta, \label{eq24}\\
				& d(Q_{j}W^{n},\tilde{Q}_{j}W^{n})\leq \xi, \label{eq25}
			\end{flalign}
			for all $j=1,\cdots,N_{n}$, where $Q_{j}$ is the random encoder.
			
			Let $\bar{\tilde{Q}}_{j} (j=1,\cdots,N_{n})$ be the quantization distribution of the $M_{n}$-type distribution $\tilde{Q}_{j}$. Therefore, $\bar{\tilde{Q}}_{j}$s are also $M_{n}$-types. By (\ref{eq20}), we have
			\begin{flalign}
				d(\tilde{Q}_{j}W^{n},\bar{\tilde{Q}}_{j}W^{n})\leq \sqrt{P}n^{3/2}e^{-n}.\label{eq26}
			\end{flalign} 
			
			Combine (\ref{eq25}) and (\ref{eq26}), we obtain
			\begin{flalign}
				d(Q_{j}W^{n},\bar{\tilde{Q}}_{j}W^{n})\leq\xi+\sqrt{P}n^{3/2}e^{-n}.\label{eq27}
			\end{flalign}
			
			Suppose that there exists $j\neq k$, such that $\bar{\tilde{Q}}_{j}=\bar{\tilde{Q}}_{k}$, then, by the triangular inequality and the definition of the TV distance, 
			\begin{flalign}
				2(1-\varepsilon-\delta)\leq d(Q_{j}W^{n},Q_{k}W^{n})\leq 2\xi+2\sqrt{P}n^{3/2}e^{-n},\nonumber
			\end{flalign}
			i.e., $\varepsilon+\delta+\xi+\sqrt{P}n^{3/2}e^{-n}\geq1$, which is a contradiction!		
			
			Therefore, $\bar{\tilde{Q}}_{j} (j=1,\cdots,N_{n})$ must be distinct $M_{n}$-type distributions, since the number of distinct $M_{n}$-types is upper bounded by $m_{n}(P)^{M_{n}}$. Hence, it must hold that			 $N_{n}\leq m_{n}(P)^{M_{n}}$,
			which implies
			\begin{flalign}
				\log\log N_{n}\leq \log M_{n}+\log\log m_{n}(P).\label{eq28}
			\end{flalign}
			
			By (\ref{eq19}) 
			\begin{flalign}
				\log\log N_{n}\leq \log M_{n}+\mathrm{O}(\log n).\label{eq29}
			\end{flalign}
			
			This completes the proof of (\ref{eq23}).
		\end{proof}

		A channel $W=\{(\mathcal{X},\mathcal{F}),(\mathcal{Y},\mathcal{G}),K\}$ is defined by an input alphabet $\mathcal{X}$ with $\sigma$-algebra $\mathcal{F}$, an output alphabet $\mathcal{Y}$ with $\sigma$-algebra $\mathcal{G}$, and a stochastic kernel $K$ that specifies the stochastic transition between the input and output alphabets. Subsequently, the lemma presented below provides a second-order achievability bound for the channel resolvability capacity, paving the way for establishing the second-order RID capacity for AWGNCs.
		\begin{lemma} (Frey's Theorem 3  \cite{ref36}) Given a channel $W=\{(\mathcal{X},\mathcal{F}),(\mathcal{Y},\mathcal{G}),K\}$ and an input distribution $Q_{X}$ such that the information density $i(X;Y)$ has finite central second moment $V$ and finite absolute third moment $\rho$, $\xi>0$ and $c>1$, suppose the rate $R$ depends on $n$ in the following way:
		\begin{flalign*}
			R=I(X;Y)+\sqrt{\frac{V}{n}}\mathrm{Q}^{-1}(\xi)+c\frac{\log n}{n}.
		\end{flalign*}
		Then, for any $d\in (0,c-1)$ and $n$ that satisfy $n^{(c-d)/2}\geq 6$, we have
		\begin{flalign*}
			P_{\mathcal{C}}&(||	P_{Y^{n}[c^{M}]}-QW^{n}||_{2}\geq \mu(1+\frac{1}{\sqrt{n}})+\frac{1}{\sqrt{n}})\\
			&\leq \exp(-\frac{1}{3}n\mu \exp(nR))\\
			&+(\frac{7}{6}+\sqrt{3\pi/2}\exp(\frac{3}{4}))\exp(-n^{\frac{1}{2}(c-d-1)}),
		\end{flalign*}			
		where
		\begin{flalign*}
			\mu:= \mathrm{Q}(\mathrm{Q}^{-1}(\xi)+d\frac{\log n}{\sqrt{nV}})+\frac{\rho}{V^{\frac{3}{2}}\sqrt{n}}
		\end{flalign*}
		tends to $\xi$ for $n\rightarrow\infty$ and $||x||_{2}$ denote the $\ell^{2}$-norm of $x$.
		\end{lemma}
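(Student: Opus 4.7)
The plan is to adapt Frey's random coding argument for the $\ell^2$ resolvability deviation, with Berry-Esseen (Lemma 2) threading the second-order correction into a truncation bias and a Chernoff/Bernstein bound controlling the codebook fluctuations. Draw $C_1,\ldots,C_M$ i.i.d.\ from $Q_X^n$ so that $\mathbb{E}[P_{Y^n[C^M]}]=QW^n$, and split the random deviation by the triangle inequality into a deterministic truncation bias, a random fluctuation around the truncated mean, and an untruncated tail contribution.

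Specifically, I would introduce the threshold $\gamma=nI(X;Y)+\sqrt{nV}\,\mathrm{Q}^{-1}(\xi)+(c-d)\log n$ and define the truncated kernel $\widetilde{W}(y^n|x^n)=W^n(y^n|x^n)\mathbf{1}\{i(x^n;y^n)\leq\gamma\}$, whose pointwise ratio against $QW^n$ is at most $2^\gamma$. Both the deterministic bias $\|\mathbb{E}\widetilde{P}_{Y^n[C^M]}-QW^n\|_2$ and the untruncated tail mass are controlled by $\mathrm{Pr}(i(X^n;Y^n)>\gamma)$; applying Berry-Esseen to the i.i.d.\ information density $i(X;Y)$ yields
\begin{equation*}
\mathrm{Pr}(i(X^n;Y^n)>\gamma)\leq \mathrm{Q}\!\left(\mathrm{Q}^{-1}(\xi)+\frac{d\log n}{\sqrt{nV}}\right)+\frac{\rho}{V^{3/2}\sqrt{n}}=\mu,
\end{equation*}
which is exactly the source of the leading $\mu(1+1/\sqrt{n})$ contribution in the stated bound. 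The random fluctuation of the truncated empirical mean is then handled by a Bernstein/Chernoff estimate on $\frac{1}{M}\sum_{i=1}^M \widetilde{W}(y^n|C_i)-\mathbb{E}\widetilde{W}(y^n|C_1)$: boundedness of the ratio by $2^\gamma$ combined with $M=\exp(nR)$ gives concentration at rate $M/2^\gamma=n^{c-d}$, and integrating the pointwise concentration against $QW^n$ produces the $\exp(-\tfrac{1}{3}n\mu\exp(nR))$ failure probability, while the residual $(7/6+\sqrt{3\pi/2}\exp(3/4))\exp(-n^{(c-d-1)/2})$ term collects the Berry-Esseen approximation error over the shrinking window of width $d\log n/\sqrt{nV}$.

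The main obstacle I expect is tuning the truncation parameter $d\in(0,c-1)$ against the rate constant $c$: too small a $d$ narrows the window and inflates the Berry-Esseen slack, while too large a $d$ weakens the Chernoff rate by shrinking $M/2^\gamma$. The compatibility condition $n^{(c-d)/2}\geq 6$ in the statement is precisely the threshold at which these two estimates balance, and extracting the explicit numerical constants $\tfrac{1}{3}$, $7/6$, and $\sqrt{3\pi/2}\exp(3/4)$ requires careful moment-generating-function bookkeeping in the Bernstein step. A secondary subtlety is that the argument must be executed directly in $\ell^2$ rather than passed through total variation, since the second-moment structure of $\|P_{Y^n[C^M]}-QW^n\|_2^2$ is what enables the clean Chebyshev/Chernoff-type computation and is what feeds downstream into the second-order converse via the resolvability--identification reduction of Lemma 3.
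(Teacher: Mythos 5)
First, a point of comparison that matters here: the paper does not prove this statement at all --- it is imported verbatim as Frey's Theorem 3 from \cite{ref36} and used as a black box, so there is no in-paper proof to measure your attempt against. On its own terms, your reconstruction follows the architecture of the actual proof in the cited reference (and of Hayashi/Cuff-style resolvability bounds generally): an i.i.d.\ codebook whose expected output distribution is $QW^{n}$, truncation of the channel at an information-density threshold, Berry--Esseen to convert the tail probability of $i(X^{n};Y^{n})$ into the quantity $\mu$ of the statement, and a multiplicative Chernoff bound (whose $\exp(-\delta^{2}m/3)$ form is visibly the origin of the constant $\tfrac{1}{3}$) for the bounded, truncated part. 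So the plan is the right one in outline.

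Two concrete problems remain. (i) Your threshold is internally inconsistent: you set $\gamma=nI(X;Y)+\sqrt{nV}\,\mathrm{Q}^{-1}(\xi)+(c-d)\log n$, but you then claim both that $\mathrm{Pr}(i(X^{n};Y^{n})>\gamma)$ is bounded by the stated $\mu$ (which requires the additive term $d\log n$, since $\mu$ contains $\mathrm{Q}(\mathrm{Q}^{-1}(\xi)+d\log n/\sqrt{nV})$) and that $M2^{-\gamma}=n^{c-d}$ (which also requires $d\log n$, given $\log M=nR$ carries the $c\log n$ term). With your $\gamma$ as written one instead gets $M2^{-\gamma}=n^{d}$ and a tail evaluated at $(c-d)\log n$; you need $\gamma=nI(X;Y)+\sqrt{nV}\,\mathrm{Q}^{-1}(\xi)+d\log n$ throughout. (ii) You attribute the term $(\tfrac{7}{6}+\sqrt{3\pi/2}\exp(\tfrac{3}{4}))\exp(-n^{(c-d-1)/2})$ to Berry--Esseen approximation error over the shrinking window. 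That cannot be its source: the Berry--Esseen error is polynomial, of order $n^{-1/2}$, and it is already absorbed into $\mu$ through the summand $\rho V^{-3/2}n^{-1/2}$. The stretched-exponential term is the failure probability of a \emph{second} concentration step --- controlling the random empirical mass that the codebook places on the atypical event $\{i>\gamma\}$, whose summands are unbounded so that only a Bernstein-type sub-exponential tail is available; the hypothesis $n^{(c-d)/2}\geq 6$ is precisely the regime condition for that bound. Without identifying this second concentration step you cannot produce the second error term or its constants, so the sketch as written does not close.
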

		
		\begin{remark}
			Given that the aforementioned lemma applies to any channel, when we consider the channel as an AWGNC, we can derive the converse bound for the second-order RID capacity of the AWGNCs.
		\end{remark}
		
		\noindent$\mathbf{Step}$ $\mathbf{4.}$ 		
		Let
		\begin{flalign}
			\xi=1-\varepsilon-\delta-\sqrt{P}n^{3/2}e^{-n}.
			\label{eq30}
		\end{flalign}
		
		Let $\delta\rightarrow 0$, i.e., the type-II error vanishes, from (\ref{eq23}) and Lemma 4, we obtain
		\begin{flalign}
			&\log\log N^{*}(\varepsilon,\delta|W^{n})\leq \log M^{*}(\xi|W^{n})+\mathrm{O}(\log n)\nonumber\\
			&\leq  n\mathrm{C}(P)+ \sqrt{n\mathrm{V}(P)}\mathrm{Q^{-1}}(1-\varepsilon-\sqrt{P}n^{3/2}e^{-n})
			+\mathrm{O}(\log n)\nonumber\\
			&=n\mathrm{C}(P)-\sqrt{n\mathrm{V}(P)}\mathrm{Q}^{-1}(\varepsilon+\sqrt{P}n^{3/2}e^{-n})+\mathrm{O}(\log n)\nonumber\\
			&\leq n\mathrm{C}(P)-\sqrt{n\mathrm{V}(P)}\mathrm{Q}^{-1}(\varepsilon)+\mathrm{O}(\log n),
			\label{eq31}
		\end{flalign}
		where the last inequality is due to the Taylor expansion 
		\begin{flalign*}
			&\mathrm{Q}^{-1}(\varepsilon+\sqrt{P}n^{3/2}e^{-n}) \\
			&=\mathrm{Q}^{-1}(\varepsilon)-\underbrace{\left.\frac{d\mathrm{Q}^{-1}(x)}{dx}\right|_{x=\varepsilon}}_{\mathrm{O}(1)}\sqrt{P}n^{3/2}e^{-n}+\mathrm{o}(n^{3/2}e^{-n}) \\
			&\geq \mathrm{Q}^{-1}(\varepsilon)+\mathrm{o}(\frac{1}{\sqrt{n}}).
		\end{flalign*}
		
		Thus, we prove the converse part.
		
		The desired results follow from (\ref{eq17}) and (\ref{eq31}).\qed
		%
		%
		
		\section{Conclusion}
		The second-order RID capacity of the AWGNC has been investigated in this paper. It has been proved that the second-order RID capacity have the same form as the  transmission capacity in AWGNC. An extension of Hayashi's Theorem to the auxiliary output distribution leads to the achievability part and a finer quantization method has been proposed in the proof of the converse part which gives the $\mathrm{O}(\log n)$ term. As a future direction, it is tempting to apply the quantization approach to other identification problems. Furthermore, it remains an open problem to determine the second-order $(\varepsilon,\delta)$-ID capacity of the AWGNC for non-vanishing type-II error $\delta$, i.e., $\delta \geq \varepsilon_{0}$ with $\varepsilon_{0}>0$ a positive constant. It is also an interesting work to derive the third-order $(\varepsilon,\delta)$-ID capacity of the AWGNC.

		\section*{Acknowledgments}
		The authors thank the anonymous referees for their valuable comments. This work is supported by National Key R\&D Program of China No. 2023YFA1009601 and 2023YFA1009602
		
		

\begin{thebibliography}{99}
			\bibliographystyle{IEEEtran}
			\balance
			
			
			\bibitem{ref1}
			L. Chettri and R. Bera, “A comprehensive survey on Internet of Things
			(IoT) toward 5G wireless systems,” \textit{IEEE Internet Things J.}, vol. 7,
			no. 1, pp. 16–32, Jan. 2020.
			
			\bibitem{ref2}
			Q. Qi, X. Chen, C. Zhong, and Z. Zhang, ``Integration of energy, computation and communication in 6G cellular internet of things,'' \textit{IEEE Commun. Lett.}, vol. 24, no. 6, pp. 1333-1337,  Jun. 2020.
			
			\bibitem{ref3}
			H. Boche and C. Deppe, ``Secure identification for wiretap channels; robustness, super-additivity and continuity,” \textit{IEEE Trans. Inf. Forensics Security}, vol. 13, no. 7, pp. 1641–1655,  Jul. 2018.
			
			\bibitem{ref4}
			K. Guan, B. Ai, M. Liso Nicol$\acute{a}$s, R. Geise, A. M$\ddot{o}$ller, Z. Zhong, and T. K$\ddot{o}$rner, “On the influence of scattering from traffic signs in vehicle-to-x communications,” \textit{IEEE Trans. Veh. Technol.}, vol. 65, no. 8, pp. 5835–5849, Aug. 2016.
			
			\bibitem{ref5}
			J. Choi, V. Va, N. Gonzalez-Prelcic, R. Daniels, C. R. Bhat, and R. W. Heath, “Millimeter-wave vehicular communication to support massive automotive sensing,” \textit{IEEE Commun. Mag.}, vol. 54, no. 12, pp. 160–167, Dec. 2016.
			
			\bibitem{ref6}
			G. P. Fettweis, “The tactile Internet: Applications and challenges,”
			\textit{IEEE Veh. Technol. Mag.}, vol. 9, no. 1, pp. 64–70, Mar. 2014.
			
			\bibitem{ref7}
			N. Michailow, L. Mendes, M. Matthé, I. Gaspar, A. Festag, and G.
			Fettweis, “Robust WHT-GFDM for the next generation of wireless networks,” \textit{IEEE Commun. Lett.}, vol. 19, no. 1, pp. 106–109, Jan. 2015.
			
			\bibitem{ref8}
			A. Matveev and A. Savkin, Estimation and Control over Communication
			Networks. Springer, 2009.
					
			\bibitem{ref9}
			H. Boche and C. Arendt, “Communication method, mobile unit, interface
			unit, and communication system,” 2021, patent number: 10959088.
			
			\bibitem{ref10}
			P. Moulin, “The role of information theory in watermarking and its application to image watermarking,” \textit{Signal Processing}, vol. 81, no. 6, pp. 1121–1139, 2001.
					
			\bibitem{ref11}
			Y. Steinberg, “Watermarking identification for private and public users: the broadcast channel approach,” \textit{IEEE Inf. Theory Workshop(ITW)},
			2002, pp. 5–7.
					
			\bibitem{ref12}
			P. Moulin and R. Koetter, “A framework for the design of good watermark identification codes,” in \textit{Security, Steganography, and Watermarking of
			Multimedia Contents VIII, E. J. D. III and P. W. Wong, Eds.}, vol. 6072, International Society for Optics and Photonics. SPIE, 2006, pp. 565 – 574.
			
			%
			
			\bibitem{ref13}
			S. Derebey\v{o}glu, C. Deppe, and R. Ferrara, “Performance analysis of identification codes,” Entropy, vol. 22, no. 10, p. 1067, Sep. 2020.
			
			\bibitem{ref14}
			R. Ahlswede and G. Dueck, ``Identification via channels," \textit{IEEE Trans. Inf. Theory}, vol. 35, no. 1, pp. 15-29, Jan. 1989.
			
			
			\bibitem{ref15}
			C. E. Shannon, ``A mathematical theory of communication," \textit{Bell Syst. Tech. J.,} vol. 27, no. 3, pp. 379-423, Jul. 1948.
			
			
			
			\bibitem{ref31}
			M. V. Burnashev, “On identification capacity of infinite alphabets or
			continuous-time channels,” \textit{IEEE Trans. Inf. Theory}, vol. 46, no. 7,
			pp. 2407–2414, Nov. 2000.
			
		    \bibitem{ref27}
			R. Ferrara, L. Torres-Figueroa, H. Boche, C. Deppe, W. Labidi, U. Mönich,
			and A. Vlad-Costin, ‘‘Implementation and experimental evaluation of
			Reed–Solomon identification,’’ \textit{in Proc. Eur. Wireless}, Sept. 2022, pp. 1–6.
			
			\bibitem{ref16}
			T. S. Han and S. Verdú, ``Approximation theory of output statistics," \textit{IEEE Trans. Inf. Theory,} vol. 39, no. 3, pp. 752-772, May 1993.
			
			\bibitem{ref17}
			Y. Steinberg, ``New converses in the theory of identification via channels", \textit{IEEE Trans. Inf. Theory,} vol. 44, no. 3, pp. 984-998, 1998.
			
			\bibitem{ref18}
			M. Hayashi, ``General nonasymptotic and asymptotic formulas in channel resolvability and identification capacity and their application to the wiretap channel," \textit{IEEE Trans. Inf. Theory,} vol. 52, no. 4, pp. 1562-1575,  Apr. 2006.
			
			\bibitem{ref19}
			T. S. Han, \textit{Information-Spectrum Methods in Information Theory,} Springer Berlin Heidelberg, 2003.
			
			\bibitem{ref20}
			S. Watanabe, ``Minimax converse for identification via channels," \textit{IEEE Trans. Inf. Theory,} vol. 68, no. 1, pp. 25-34, Jan. 2022.
			
			\bibitem{ref21}
			S. Verdu and V. K. Wei, ‘‘Explicit construction of optimal constant-weight
			codes for identification via channels,’’ \textit{IEEE Trans. Inf. Theory}, vol. 39,
			no. 1, pp. 30–36, Jan. 1993.
			
			\bibitem{ref22}
			R. Ahlswede and Ning Cai, “Identification without randomization,”
			\textit{IEEE Trans. Inf. Theory}, vol. 45, no. 7, pp. 2636–2642, 1999.
			
			\bibitem{ref23} 
			J. J$\mathrm{\acute{a}}$J$\mathrm{\acute{a}}$, “Identification is easier than decoding,” \textit{in Ann. Symp. Found.
			Comp. Scien. (SFCS)}, 1985, pp. 43–50.
			
			\bibitem{ref24}
			M. J. Salariseddigh, U. Pereg, H. Boche, and C. Deppe, “Deterministic
			identification over fading channels,” \textit{in Proc. IEEE Inf. Theory Workshop
			(ITW)}, Apr. 2021, pp. 1–5.
			
			\bibitem{ref25}
			M. J. Salariseddigh, U. Pereg, H. Boche, and C. Deppe, “Deterministic
			identification over channels with power constraints,” \textit{IEEE Trans. Inf.
			Theory}, vol. 68, no. 1, pp. 1–24, Jan. 2022.
			
			\bibitem{ref26}
			Y. Li, X. Wang, H. Zhang, J. Wang, W. Tong, G. Yan, and Z. Ma, “Deterministic identification over channels without CSI,” \textit{in Proc. IEEE Inf. Theory Workshop (ITW)}, Nov. 2022, pp. 332–337
			
			\bibitem{ref28}
			W. Labidi, C. Deppe, and H. Boche, “Secure identification for Gaussian
			channels,” \textit{in Proc. IEEE Int. Conf. Acoust., Speech Signal Process.
			(ICASSP)}, May 2020, pp. 2872–2876.		
			
			\bibitem{ref29}
			J. Cabrera, H. Boche, C. Deppe, R. F. Schaefer, C. Scheunert, and
			F. H. Fitzek, “6G and the Post-Shannon theory,” in \textit{Shaping Future 6G
			Networks: Needs, Impacts and Technologies.} Hoboken, NJ, USA: Wiley,
			2021, pp. 271–294.
			
			\bibitem{ref30}
			R. Ahlswede, \textit{Identification and Other Probabilistic Models,}
			A. Ahlswede, I. Althöfer, C. Deppe, and U. Tamm, Eds. Cham,
			Switzerland: Springer, 2021.
			
			
			
			
			\bibitem{ref32}
			Y. Polyanskiy, H. V. Poor, and S. Verdú, ``Channel coding rate in the finite blocklength regime,'' \textit{IEEE Trans. Inf. Theory,}  vol. 56, no. 5, pp. 2307-2359, May 2010.
			
			
			\bibitem{ref33}
			E. MolavianJazi and J. N. Laneman, “A finite-blocklength perspective on Gaussian multi-access channels.” [Online]. Available: http://arxiv.org/abs/1309.2343, 2014.
			
			\bibitem{ref34}
			V. Y. F. Tan,  ``Asymptotic estimates in information theory with non-vanishing error probabilities." \textit{Found. Trends Commun. Inf. Theory}, vol. 11, no. 1-2, pp. 1-184, 2014.
			
			\bibitem{ref35}
			M. S. Pinsker, \textit{Information and Information Stability of Random Variables and Processes} (in Russian). Moscow, U.S.S.R.: Izv. Akad. Nauk, 1960.
			
			\bibitem{ref36}
			M. Frey, I. Bjelakovic, and S. Stanczak, “Resolvability on continuous
			alphabets,” in \textit{Proc. IEEE Int. Symp. Inf. Theory}, Jun. 2018,
			pp. 2037–2041.
			
			
		\end{thebibliography}

	\end{document}